\newcommand{\ie}{\emph{i.e.}}
\newcommand{\etal}{\emph{et al.}}
\newcommand{\syn}[1]{\mathit{#1}} 
\newcommand{\var}[1]{\mathit{#1}}
\newcommand{\s}[1]{\mathit{#1}}
\newcommand{\parto}{\rightarrow_{\mathrm{fin}}}
\newcommand{\dom}{\var{dom}}
\newcommand{\set}[1]{\left\{#1\right\}}
\newcommand{\Pow}[1]{{\mathcal{P}\left(#1\right)}}
\newcommand{\wt}{\sqsubseteq}
\newcommand{\join}{\sqcup}
\newcommand{\bigjoin}{\bigsqcup}
\newenvironment{grammar}{\begin{array}{r@{\;}c@{\;}l@{\;}c@{\;}l@{\;\;\;\;}l}}{\end{array}}
\newcommand{\opor}{\mathrel{|}}
\newcommand{\produces}{=}
\newcommand{\vv}{x}
\newcommand{\schfalse}{{\mbox{\tt\#f}}}
\newcommand{\ttlp}{\mbox{\tt (}}
\newcommand{\ttrp}{\mbox{\tt )}}
\newcommand{\appform}[2]{\ttlp #1 #2\ttrp}
\newcommand{\lamform}[2]{\ttlp \uplambda #1.#2\ttrp}
\newcommand{\ifform}[3]{\ttlp {\tt if}\; #1\; #2\; #3\ttrp}
\newcommand{\expr}{e}
\newcommand{\State}{\Sigma}
\newcommand{\state}{\varsigma}
\newcommand{\Env}{\s{Env}}
\newcommand{\Den}{\s{Val}} 
\newcommand{\store}{\sigma}
\newcommand{\env}{\rho}
\newcommand{\cont}{\kappa}
\newcommand{\alloc}{\mathit{alloc}}
\newcommand{\addr}{a}
\newcommand{\tm}{t}
\newcommand{\tick}{{{tick}}}
\newcommand{\sa}[1]{\widehat{\mathit{#1}}}
\newcommand{\aState}{{\hat{\Sigma}}}
\newcommand{\astate}{{\hat{\varsigma}}}
\newcommand{\astore}{{\hat{\sigma}}}
\newcommand{\aaddr}{{\hat{\addr}}}
\newcommand{\aalloc}{{\widehat{alloc}}}
\newcommand{\atick}{{\widehat{tick}}}
\newcommand{\absmap}{\alpha}
\newcommand\secstar{\texorpdfstring{\textsuperscript{\normalsize $\boldsymbol{*}$}}{*}}
\newcommand\CEK{{\mathit{CEK}}}
\newcommand\ACEK{{\widehat{\CEK}}}
\newcommand\CESKP{{\mathit{CESK}^*}}
\newcommand\CESKPT{{\mathit{CESK}^*_t}}
\newcommand\ACESKPT{{\widehat{\mathit{CESK}^*_t}}}
\newcommand\CESP{{CESK$^*$}}
\newcommand{\multistep}{\longmapsto\!\!\!\!\!\rightarrow}
\newcommand\tmnext{u}
\newcommand\addrnext{b}
\newcommand\addrnextnext{c}
\newcommand\Storable{Storable}
\newcommand\sto{s}
\newcommand\den{v}
\newcommand\mtk{\mathbf{mt}}
\newcommand\fnk{\mathbf{fn}}
\newcommand\ark{\mathbf{ar}}
\newcommand\mte{\emptyset} 
\newcommand\real{\mathit{real}}
\newcommand\liveloc{\mathit{LL}}
\newcommand\betavalue{{\mathrel{\mathbf{v}}}}
\newcommand\evf{\mathit{eval}} 
\newcommand\inj{\mathit{inj}} 
\newcommand\fv{\mathbf{fv}}
\newcommand\restrict[2]{#1 | #2}
\newtheorem{theorem}{Theorem}
\newtheorem{lemma}{Lemma}
\begin{document}
\CopyrightYear{2011}

\subtitle{A Systematic Approach to Higher-Order Program Analysis}
\title{Abstracting Abstract Machines
\thanks{The original version of this paper was published as 
  ``Abstracting Abstract Machines'' in
  \emph{Proceedings of the 15th ACM SIGPLAN International Conference
    on Functional Programming}.}  }

\numberofauthors{2}
\author{
\alignauthor
David {Van Horn}\titlenote{Supported by the National Science
  Foundation under grant 0937060 to the Computing Research Association
  for the CIFellow Project.}\\
       \affaddr{Northeastern University}\\
       \affaddr{Boston, Massachusetts}\\
       \email{dvanhorn@ccs.neu.edu}
\alignauthor
Matthew Might\\
       \affaddr{University of Utah}\\
       \affaddr{Salt Lake City, Utah}\\
       \email{might@cs.utah.edu}
}

\maketitle 

\begin{abstract} 
  Predictive models are fundamental to engineering reliable software systems.
  However, designing conservative, computable approximations for the behavior
  of programs (static analyses) 
  remains a difficult and error-prone process
  for modern high-level programming languages.
  What analysis designers need is a principled method for navigating the gap
  between semantics and analytic models:
  analysis designers need a method that tames the \emph{interaction} of
  complex languages features such as higher-order functions, recursion,
  exceptions, continuations, objects and dynamic allocation.  

  We contribute a \emph{systematic approach to program
    analysis} that yields novel and transparently sound static
  analyses.  Our approach relies on existing derivational techniques
  to transform high-level language semantics into low-level
  deterministic state-transition systems (with potentially infinite
  state spaces).  We then perform a series of simple machine
  refactorings to obtain a sound, computable approximation, which
  takes the form of a \emph{non-deterministic} state-transition
  systems with \emph{finite} state spaces. The approach scales up
  uniformly to enable program analysis of realistic language features,
  including higher-order functions, tail calls, conditionals, side
  effects, exceptions, first-class continuations, and even garbage
  collection.
\end{abstract}

\section{Introduction}

Software engineering, compiler optimizations, program parallelization,
system verification, and security assurance depend on program
analysis, a ubiquitous and central theme of programming language
research.  At the same time, the production of modern software systems
employs expressive, higher-order languages such as Java, JavaScript,
C\#, Python, Ruby, etc., implying a growing need for fast, precise,
and scalable higher-order program analyses.

Program analysis aims to soundly predict properties of
programs before being run. 
  (\emph{Sound} in program analysis means ``conservative approximation'':
  if a sound analysis says a program must not exhibit behavior, then
    that program will not exhibit that behavior;
    but if a sound analysis says a program may exhibit a behavior,
    then it may or may not exhibit that behavior.)
For over thirty years, the research community has expended significant
effort designing effective analyses for higher-order
programs~\cite{dvanhorn:Midtgaard2010Controlflow}.
Past approaches have focused on connecting high-level language
semantics such as structured operational semantics, denotational
semantics, or reduction semantics to equally high-level but dissimilar
analytic models.  These models are too often far removed from their
programming language counterparts and take the form of constraint
languages specified as relations on sets of program
fragments~\cite{dvanhorn:wright-jagannathan-toplas98,dvanhorn:Neilson:1999,dvanhorn:Meunier2006Modular}.
These approaches require significant ingenuity in their design and
involve complex constructions and correctness arguments, making it
difficult to establish soundness, design algorithms, or grow the
language under analysis.  Moreover, such analytic models, which focus
on ``value flow'', \ie,~determining which syntactic values may show
up at which program sites at run-time, have a limited capacity to
reason about many low-level intensional properties such as memory
management, stack behavior, or trace-based properties of computation.
Consequently, higher-order program analysis has had limited impact on
large-scale systems, despite the apparent potential for program
analysis to aid in the construction of reliable and efficient
software.

In this paper, we describe a \emph{systematic approach to program
  analysis} that overcomes many of these limitations by providing a
straightforward derivation process, lowering verification costs and
accommodating sophisticated language features and program properties.

Our approach relies on leveraging existing techniques to transform
high-level language semantics into \emph{abstract
  machines}---low-level deterministic state-transition systems with
potentially infinite state spaces.  Abstract
machines~\cite{dvanhorn:landin-64}, and the paths from semantics to
machines~\cite{dvanhorn:reynolds-acm72,dvanhorn:Danvy:DSc,dvanhorn:Felleisen2009Semantics},
have a long history in the research on programming languages.

From an abstract machine, which represents the idealized core of a
realistic run-time system, we perform a series of basic machine
refactorings to obtain a \emph{non-deterministic} state-transition
system with a \emph{finite} state space.  The refactorings are simple:
(1) variable bindings and the control stack are redirected through the
machine's store and (2) the store is bounded to a finite size.  Due to
finiteness, store updates must become merges, leading to the
possibility of multiple values residing in a single store location.
This in turn requires store look-ups be replaced by a non-deterministic
choice among the multiple values at a given location.  The derived
machine computes a sound approximation of the original machine, and
thus forms an \emph{abstract interpretation} of the machine and the
high-level semantics.

The approach scales up uniformly to enable program analysis of
realistic language features, including higher-order functions, tail
calls, conditionals, side effects, exceptions, first-class
continuations, and even garbage collection.
Thus, we are able to refashion semantic techniques used to model
language features into abstract interpretation techniques for
reasoning about the behavior of those very same features.

\emph{Background and notation}: We present a brief introduction to
reduction semantics and abstract machines.  For background and a more
extensive introduction to the concepts, terminology, and notation
employed in this paper, we refer the reader to \emph{Semantics
  Engineering with PLT Redex}~\cite{dvanhorn:Felleisen2009Semantics}.

\section{From Semantics to Machines and Machines to Analyses}
\label{sec:cek-to-acesk}

In this section, we demonstrate our systematic approach to analysis by
stepping through a derivation from the high-level semantics of a
prototypical higher-order programming language to a low-level abstract
machine, and from the abstract machine to a sound and computable
analytic model that predicts intensional properties of that machine.
As a prototypical language, we choose the call-by-value
$\lambda$-calculus \cite{dvanhorn:Plotkin1975Callbyname}, a core
computational model for both functional and object-oriented languages.
We choose to model program behavior with a simple operational model
given in the form of a reduction semantics.  Despite this simplicity,
reduction semantics scale to full-fledged programming
languages~\cite{dvanhorn:Sperber2010Revised}, although the choice is
somewhat arbitrary since it is known how to construct abstract
machines from a number of semantic
paradigms~\cite{dvanhorn:Danvy:DSc}.
In subsequent sections, we demonstrate the approach handles richer
language features such as control, state, and garbage collection, and
we have successfully employed the same method to statically reason
about language features such as laziness, exceptions, and
stack-inspection, and programming languages such as Java and
JavaScript.
In all cases, analyses are derived following the systematic approach
presented here.

\subsection{Reduction semantics}
\label{sec:red}

To begin, consider the following language of expressions:%
\[
\begin{grammar}
  \expr &\in& \syn{Exp} &\produces& 
 \vv \opor \appform{\expr}{\expr} \opor   \lamform{\vv}{\expr}
\\
  \vv &\in& \syn{Var} & &\text{ an infinite set of identifiers}
  \text.
\end{grammar}
\]
The syntax of expressions includes variables, applications, and
functions.  Values $\den$, for the purposes of this language, include
only function terms, $\lamform\vv\expr$.  We say $\vv$ is
the \emph{formal parameter} of the function $\lamform\vv\expr$,
and $\expr$ is its \emph{body}.
A \emph{program} is a closed expression, \ie,~an expression in which
every variable occurs within some function that binds that variable as
its formal parameter.
Call-by-value \emph{reduction} is characterized by the relation $\betavalue$:
\[
\begin{array}{rcl}
\appform{\lamform{\vv}{\expr}}{\den} & \betavalue & [\den/\vv]\expr\text,
\end{array}
\]
which states that a function applied to a value reduces to the body of
the function with every occurrence of the formal parameter replaced by
the value.  The expression on the left-hand side is a known as
a \emph{redex} and the right-hand side is its \emph{contractum}.

Reduction can occur within a context of an \emph{evaluation context},
defined by the following grammar:
\[
\begin{grammar}
 && E &\produces& [\;] \opor \appform{E}{\expr} \opor \appform{\den}{E}\text.
\end{grammar}
\]
An evaluation context can be thought of as an expression with a single
``hole'' in it, which is where a redex may be reduced.  It is
straightforward to observe that for all programs, either the program
is a value, or it decomposes uniquely into an evaluation context and
redex, written $E[\appform{\lamform{\vv}{\expr}}{\den}]$.
Thus the grammar as given specifies a deterministic reduction
strategy, which is formalized as a \emph{standard reduction relation}
on programs:
\[
E[\expr] \longmapsto_\betavalue E[\expr'],\mbox{ if } \expr\;\betavalue\;\expr'\text.
\]
The \emph{evaluation} of a program is defined by a partial function
relating programs to values~\cite[page
67]{dvanhorn:Felleisen2009Semantics}:
\[
\evf(\expr) = \den\text{ if } \expr \multistep_\betavalue \den\text{, for some }\den\text,
\]
where $\multistep_\betavalue$ denotes the reflexive, transitive
closure of the standard reduction relation.

We have now established the high-level semantic basis for our
prototypical language.  The semantics is in the form of an evaluation
function defined by the reflexive, transitive closure of the
standard reduction relation.
However, the evaluation function as given does not shed much light on
a realistic implementation.  At each step, the program is traversed
according to the grammar of evaluation contexts until a redex is
found.  When found, the redex is reduced and the contractum is plugged
back into the context.  The process is then repeated, again traversing
from the beginning of the program.
Abstract machines offer an extensionally equivalent but more realistic
model of evaluation that short-cuts the plugging of a contractum back
into a context and the subsequent
decomposition~\cite{dvanhorn:Danvy-Nielsen:RS-04-26}.

\subsection{CEK machine}
\label{sec:cek}

The CEK
machine~\cite[Interpreter~III]{dvanhorn:reynolds-acm72}\cite[page
100]{dvanhorn:Felleisen2009Semantics} is a state transition system
that efficiently performs evaluation of a program.  There are
two key ideas in its construction, which can be carried out
systematically~\cite{dvanhorn:Biernacka2007Concrete}.  The first is
that substitution, which is not a viable implementation strategy, is
instead represented in a delayed, explicit manner as
an \emph{environment} structure.  So a substitution $[\den/\vv]\expr$
is represented by $\expr$ and an environment that maps $\vv$ to
$\den$.  Since $\expr$ and $\den$ may have previous substitutions
applied, this will likewise be represented with environments.  So in
general, if $\env$ is the environment of $\expr$ and $\env'$ is the
environment of $\den$, then we represent $[\den/\vv]\expr$ by $\expr$
in the environment $\env$ extended with a mapping of $\vv$ to
$(\den,\env')$, written $\env[\vv\mapsto(\den,\env')]$.  The pairing
of a value and an environment is known as
a \emph{closure}~\cite{dvanhorn:landin-64}.

The second key idea is that evaluation contexts are constructed
inside-out and represent continuations:
\begin{enumerate}
\setlength{\itemsep}{1pt}
\item
$[\; ]$ is represented by $\mtk$; 
\item $E[\appform{[\; ]}{\expr}]$ is
represented by $\ark(\expr',\env,\cont)$ where $\env$ closes $\expr'$
to represent $\expr$ and $\cont$ represents $E$;  and
\item $E[\appform{\den}{[\;
]}]$ is represented by $\fnk(\den',\env,\cont)$ where $\env$ closes
$\den'$ to represent $\den$ and $\cont$ represents $E$.
\end{enumerate}
In this way, evaluation contexts form a program stack: $\mathbf{mt}$ is
the empty stack, and $\mathbf{ar}$ and $\mathbf{fn}$ are frames.

States of the CEK machine are triples consisting of an expression, an
 environment that closes the control string, and a continuation:
\[
\begin{grammar}
 \state &\in& \State &=& 
  \syn{Exp} \times \s{Env} \times \s{Cont} 
\\
   \den &\in& \s{Val} &\produces& \lamform{\vv}{\expr}
\\
  \env &\in &\s{Env} &=& \syn{Var} \parto \s{Val} \times \s{Env}
\\
  \kappa &\in& \s{Cont} &\produces&
  \mtk \opor \ark(\expr,\env,\kappa) \opor \fnk(\den,\env,\kappa)\text.
\end{grammar}
\]

\begin{figure}
\[
\begin{array}{@{}l@{\quad}|@{\quad}r@{}}
\multicolumn{2}{c}{\state \longmapsto_\CEK \state'} \\[1mm]
\hline 
\\
\langle\vv, \env, \cont\rangle &
\langle\den,\env', \cont\rangle
 \mbox{ where }\env(\vv) = (\den,\env') 
\\[1mm]
\langle\appform{\expr_0}{\expr_1}, \env, \cont\rangle &
\langle\expr_0, \env, \ark(e_1, \env, \cont)\rangle
\\[1mm]
\langle\den,\env, \ark(e,\env',\cont)\rangle &
\langle\expr,\env',\fnk(\den,\env,\cont)\rangle
\\[1mm]
\langle\den,\env, \fnk(\lamform{\vv}{\expr},\env',\cont)\rangle &
\langle\expr,\env'[\vv \mapsto (\den,\env)], \cont\rangle
\end{array}
\]
\caption{CEK machine.}
\label{fig:cek}
\end{figure}

The transition function for the CEK machine is defined in
Figure~\ref{fig:cek}.
The initial machine state for a program $\expr$ is given by the
$\inj_\CEK$ function:
\[
\inj_\CEK(\expr) = \langle\expr,\mte,\mtk\rangle\text.
\]
Evaluation is defined by the reflexive, transitive closure of the
machine transition relation and a ``$\real$'' function~\cite[page
129]{dvanhorn:Plotkin1975Callbyname} that maps closures to the term
represented:
\[
\evf_\CEK(\expr) = \real(\den,\env)\text{, where }\inj_\CEK(\expr)
\multistep_\betavalue \langle\den,\env,\mtk\rangle\text,
\]
which is equivalent to the $\evf$ function of Section~\ref{sec:red}:
\begin{lemma}[CEK Correctness~\cite{dvanhorn:Felleisen2009Semantics}]
$\evf_{\mathit{CEK}} = \evf$.
\end{lemma}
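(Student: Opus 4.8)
The plan is to prove the equality of the two partial functions by exhibiting a tight simulation between the CEK transition relation and the standard reduction relation $\longmapsto_\betavalue$, mediated by a decoding that reads a machine state back into a $\lambda$-term. Since both $\evf$ and $\evf_\CEK$ are partial functions into values, it suffices to show they have the same domain and agree wherever defined, and I will obtain both inclusions from the simulation.

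First I would define a decoding function $\unload \colon \State \to \syn{Exp}$ by $\unload(\langle \expr, \env, \cont\rangle) = E[\real(\expr, \env)]$, where $\real$ carries out the delayed substitutions recorded in the environment (extending the $\real$ of the statement from closures $\langle \den, \env\rangle$ to arbitrary control/environment pairs) and $E$ is the evaluation context obtained by reading $\cont$ inside-out: $\mtk$ becomes $[\;]$, $\ark(\expr', \env', \cont')$ wraps the context of $\cont'$ around $\appform{[\;]}{\real(\expr', \env')}$, and $\fnk(\den', \env', \cont')$ wraps it around $\appform{\real(\den', \env')}{[\;]}$. I would carry a well-formedness invariant on reachable states---every environment closes its control string and binds variables only to closed values---so that $\unload$ is total on reachable states and no variable lookup gets stuck. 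The central auxiliary fact is a substitution lemma stating that extending an environment corresponds to term substitution, i.e.\ that $\real(\expr, \env'[\vv \mapsto \langle\den,\env\rangle])$ is $[\,\real(\den,\env)/\vv\,]$ applied to $\real(\expr,\env')$ (up to the pending substitutions of $\env'$); this is exactly what makes closures a faithful representation of substitution.

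Next I would classify the four transitions. The variable-lookup, operator-focusing, and operator/argument-swap rules are \emph{administrative}: each leaves $\unload$ invariant, since resolving a variable to its bound value, descending into $\appform{[\;]}{\expr_1}$, and descending into $\appform{\den}{[\;]}$ all denote the same decoded term. The fourth rule, application of a $\lambda$ to a value, realizes exactly one standard reduction: by the substitution lemma $\unload(\langle \den, \env, \fnk(\lamform\vv\expr, \env', \cont)\rangle) \longmapsto_\betavalue \unload(\langle \expr, \env'[\vv\mapsto\langle\den,\env\rangle], \cont\rangle)$. Since $\unload(\inj_\CEK(\expr)) = \real(\expr,\mte) = \expr$ for a closed program and $\unload(\langle\den,\env,\mtk\rangle) = \real(\den,\env)$, any halting run projects onto a standard reduction from $\expr$ to $\real(\den,\env)$, giving $\evf_\CEK(\expr) = \evf(\expr)$ whenever the machine halts. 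For the converse I would use unique decomposition of programs into context and redex together with determinism of both systems to show, by induction on reduction length, that from any well-formed state decoding to a non-value $E[\appform{\lamform\vv{\expr'}}{\den'}]$ the machine takes finitely many administrative steps to a state poised at that redex, of the form $\langle \den'', \env'', \fnk(\lamform\vv{\expr'}, \cdot, \cont)\rangle$, and then one application step to a state decoding to the contractum plugged back into $E$; hence $\expr \multistep_\betavalue \den$ forces the machine to halt with a state decoding to $\den$.

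I expect the main obstacle to lie in two coupled places. The first is the substitution lemma and its well-formedness invariant: proving that the environment/closure machinery implements capture-free substitution exactly, so that the fourth rule decodes to a genuine $\betavalue$-contraction, requires care with how nested environments compose and relies on programs being closed to avoid capture. The second is the termination of administrative steps needed for the converse inclusion---showing the machine never loops performing only search and lookup transitions and is never stuck on a decomposable term (a progress property), so that each contraction is matched by a finite, nonempty burst of administrative steps followed by exactly one application step. Establishing that this refocusing of decomposition into incremental machine steps terminates is the delicate, technical heart of the argument.
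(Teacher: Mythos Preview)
The paper does not prove this lemma; it states it as a known result, citing Felleisen, Findler, and Flatt's \emph{Semantics Engineering with PLT Redex}. Your proposal is correct and is precisely the standard argument found in that reference: define an unload map via $\real$ and the inside-out reading of continuations as evaluation contexts, show three of the four rules are administrative (decode-preserving) while the $\fnk$-application rule realizes a single $\betavalue$ step, and obtain both inclusions from this simulation together with unique decomposition and termination of administrative bursts. There is nothing to contrast here beyond noting that the paper outsources the proof entirely.
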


We have now established a correct low-level evaluator for our
prototypical language that is extensionally equivalent to the
high-level reduction semantics.
However, program analysis is not just concerned with the result of a
computation, but also with \emph{how} it was produced, \ie,~analysis
should predict intensional properties of the machine as it runs a
program.
We therefore adopt a reachable states semantics that relates a
program to the set of all its intermediate steps:
\[
\CEK(\expr) = \{ \state\ |\  \inj_\CEK(\expr) \multistep_{\mathit{CEK}} \state \}
\text.
\]

Membership in the set of reachable states is straightforwardly
undecidable.  The goal of analysis, then, is to construct an
\emph{abstract interpretation}~\cite{dvanhorn:Cousot:1977:AI} 
that is a sound and computable approximation of the $\CEK$ function.

We can do this by constructing a machine that is similar in structure to
the CEK machine:
it is defined by an \emph{abstract state transition} relation,
$\longmapsto_\ACEK$,
which operates over
\emph{abstract states}, $\hat{\State}$, 
that approximate states of the CEK machine.
Abstract evaluation is then defined as:
\[
\ACEK(\expr) = \{ \astate\ |\ \inj_\ACEK(\expr)
\multistep_{\widehat{\mathit{CEK}}} \astate \}
\text.
\]

\begin{enumerate}
\item \emph{Soundness} is achieved by showing 
transitions preserves approximation, so that if
 \(\state \longmapsto_\CEK \state'\) and \(\astate\) approximates
\(\state\),
then there exists an abstract state \(\astate'\) such that
  \(\astate
\longmapsto_\ACEK  \astate'\) and \(\astate'\) approximates \(\state'\).

\item \emph{Decidability} is achieved by constructing the approximation in
such a way that the state-space of the abstracted machine is finite,
which guarantees that for any program $\expr$, the set
$\ACEK(\expr)$ is finite.
\end{enumerate}

\noindent
\textbf{An attempt at approximation}:
A simple approach to abstracting the machine's state space is to apply
a {\em structural abstraction}, which lifts approximation across the
structure of a machine state, \ie,~expressions, environments, and
continuations.
The problem with the structural abstraction approach for the CEK
machine is that both environments and continuations are recursive
structures.
As a result, the abstraction yields objects in an abstract
state-space with recursive structure, implying the space is infinite.

Focusing on recursive structure as the source of the problem, our
course of action is to add a level of indirection, forcing recursive
structure to pass through explicitly allocated addresses.
Doing so unhinges the recursion in the machine's data structures,
enabling structural abstraction via a single point of approximation:
the store.

The next section covers the first of the two steps for refactoring the
CEK machine into its computable approximation: a store component is
introduced to machine states and variable bindings and continuations
are redirected through the store.  This step introduces no
approximation and the constructed machine operates in lock-step with
the CEK machine.  However, the machine is amenable to a direct
structural abstraction.

\subsection{CESK\secstar\ machine}
\label{sec:ceskp}

The states of the \CESP\ machine extend those of the CEK machine to
include a \emph{store}, which provides a level of indirection for
variable bindings and continuations to pass through.
The store is a finite map from \emph{addresses} to \emph{storable
  values}, which includes closures and continuations, and environments
  are changed to map variables to addresses.
When a variable's value is looked-up by the machine, it is now
accomplished by using the environment to look up the variable's
address, which is then used to look up the value.
To bind a variable to a value, a fresh location in the store is
allocated and mapped to the value; the
environment is extended to map the variable to that address.

To untie the recursive structure associated with continuations, we
likewise add a level of indirection through the store and replace the
continuation component of the machine with a \emph{pointer} to a
continuation in the store.
We term the resulting machine the \CESP{} (control, environment,
store, continuation pointer) machine.
\[
\begin{grammar}
  \state &\in& \State &=& 
  \syn{Exp} \times \s{Env} \times \s{Store} \times \s{Addr}
  \\
  \sto &\in& \Storable &=& \Den \times \Env + \s{Cont}
  \\
  \kappa &\in& \s{Cont} &\produces&
  \mtk \opor \ark(\expr,\env,\addr) \opor \fnk(\den,\env,\addr)\text.
\end{grammar}
\]

The transition function for the \CESP\ machine is defined in
Figure~\ref{fig:cesa}.
The initial state for a program is given by the $\inj_\CESKP$
function, which combines the expression with the empty environment
and a store with a single pointer to the empty continuation, whose
address serves as the initial continuation pointer:
\[
\inj_\CESKP(\expr) = \langle\expr,\mte,[\addr_0 \mapsto \mtk],\addr_0\rangle\text.
\]

An evaluation function based on this machine is defined following the
template of the CEK evaluation given in Section~\ref{sec:cek}:
\begin{align*}
\evf_\CESKP(\expr) &= \real(\den,\env,\store)\text{, where }\\
&\inj_\CESKP(\expr) \multistep_\CESKP \langle\den,\env,\store,\addr_0\rangle
\text,
\end{align*}
where the $\real$ function is suitably extended to follow the environment's
indirection through the store.

We also define the set of reachable machine states:
\[
\CESKP(\expr) = \{ \state\ |\  \inj_\CESKP(\expr) \multistep_\CESKP \state \}
\text.
\]

Observe that for any program, the CEK and CESK$^*$ machines
operate in lock-step: each machine transitions, by the corresponding
rule, if and only if the other machine transitions.
\begin{lemma}
$\CESKP(\expr) \simeq \CEK(\expr)$
\end{lemma}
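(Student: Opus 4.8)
The plan is to exhibit the similarity $\simeq$ as a bisimulation induced by a \emph{store-unloading} function $\unload$ that collapses a \CESP\ state back into a CEK state by chasing the store's indirections. Concretely, I would define $\unload$ to map a state $\langle\expr,\env,\store,\addr\rangle$ to a CEK triple $\langle\expr,\env',\cont\rangle$, where $\env'$ replaces every address in the range of $\env$ by the (recursively unloaded) closure the store binds to it, and $\cont$ is obtained by dereferencing the continuation pointer $\addr$ in $\store$ and recursively unloading the environments and pointers embedded in the resulting frames. Reading $\simeq$ as the pointwise lifting of this correspondence, the statement $\CESKP(\expr)\simeq\CEK(\expr)$ unfolds to the claim that $\unload$ carries the reachable set of the \CESP\ machine exactly onto the reachable set of the CEK machine, i.e.\ $\{\unload(\state) : \state\in\CESKP(\expr)\} = \CEK(\expr)$.

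First I would check that $\unload$ is well defined. The recursion that chases pointers could in principle diverge, so I would argue that on any reachable store the relevant addresses form a well-founded order: the \CESP\ machine only ever binds \emph{fresh} addresses, and each stored closure or continuation frame refers only to addresses allocated strictly earlier. Hence the portion of the store reachable from any state is acyclic, the pointer-chasing terminates, and $\unload$ is a total function on reachable states. The base case is then immediate, since $\unload(\inj_\CESKP(\expr)) = \langle\expr,\mte,\mtk\rangle = \inj_\CEK(\expr)$: the initial store binds only $\addr_0\mapsto\mtk$ and the initial environment is empty.

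Next I would prove the one-step correspondence in both directions, which is where the lock-step observation stated just before the lemma is cashed out. For the forward direction I would show that $\state\longmapsto_\CESKP\state'$ implies $\unload(\state)\longmapsto_\CEK\unload(\state')$, by a case analysis on the four transition rules; the rules of the two machines are in syntactic one-to-one correspondence, so each case reduces to checking that dereferencing-then-unloading commutes with the rule. For the backward direction I would show that whenever $\unload(\state)\longmapsto_\CEK\tau'$ there is a $\state'$ with $\state\longmapsto_\CESKP\state'$ and $\unload(\state')=\tau'$, uniqueness of $\state'$ following from the determinism of both machines.

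The main obstacle is the variable-binding rule (and, symmetrically, the continuation-pushing rules), where the \CESP\ machine allocates a fresh address and extends both the environment and the store. Here I must verify that the combined effect of the indirection---extending $\env$ with $\vv\mapsto\addr$ and $\store$ with $\addr\mapsto(\den,\env')$ and then unloading---reproduces exactly the CEK step that extends the environment with $\vv\mapsto(\den,\env')$, and that freshness guarantees the new binding cannot corrupt the unloading of addresses already present. Once the one-step correspondence is in hand, a routine induction on the length of the transition sequence, anchored by the injection base case, lifts it to the reflexive--transitive closures $\multistep_\CESKP$ and $\multistep_\CEK$, yielding the set-level equality that $\simeq$ denotes.
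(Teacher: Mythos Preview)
Your proposal is correct and is precisely the standard way to make rigorous the informal ``lock-step'' observation that the paper relies on. The paper itself does not give a proof of this lemma: it simply states, immediately before the lemma, that ``for any program, the CEK and CESK$^*$ machines operate in lock-step: each machine transitions, by the corresponding rule, if and only if the other machine transitions,'' and leaves it at that. Your unloading function $\unload$, the well-foundedness argument from freshness of allocation, and the rule-by-rule bisimulation are exactly what is needed to discharge that observation formally; the paper omits these details entirely.
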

The above lemma implies correctness of the machine.
\begin{lemma}[CESK$^{\boldsymbol{*}}$ Correctness]
\label{lem:pointer-equiv}
$\evf_{\mathit{CESK}^*} = \evf$.
\end{lemma}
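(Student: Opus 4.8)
The plan is to prove the identity by transitivity, leaning on Lemma~1 and on the lock-step correspondence between the two machines recorded in the preceding lemma. Since Lemma~1 already gives $\evf_\CEK = \evf$, it suffices to establish $\evf_\CESKP = \evf_\CEK$; the remaining work is therefore entirely about relating the \CESP\ machine to the CEK machine, with the high-level reduction semantics playing no further role.

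First I would make the informal correspondence precise by defining an \emph{unloading} map $\unload$ that sends a \CESP\ configuration $\langle \expr, \env, \store, \addr \rangle$ to a CEK state $\langle \expr, \env', \cont \rangle$. The map dereferences every address the environment mentions, recursively rebuilding a CEK environment (which carries closures directly rather than through the store), and follows the continuation pointer $\addr$ through $\store$ to reconstruct the corresponding $\s{Cont}$ value built from $\mtk$, $\ark$, and $\fnk$ frames. The key well-definedness obligation is that this recursion terminates: because the concrete machine only ever writes to freshly allocated addresses, and the source language has no mutation, every closure or continuation frame stored at an address refers only to addresses allocated strictly earlier. Hence the reachable fragment of $\store$ is acyclic and $\unload$ is well-defined by induction on allocation order.

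Next I would verify that $\unload$ witnesses a strong bisimulation that also matches initial states. Matching initial states is immediate: $\unload(\inj_\CESKP(\expr)) = \langle \expr, \mte, \mtk \rangle = \inj_\CEK(\expr)$, since the initial store binds only $\addr_0 \mapsto \mtk$. The bisimulation step is exactly the lock-step property asserted by the preceding lemma, $\CESKP(\expr) \simeq \CEK(\expr)$: one has $\state \longmapsto_\CESKP \state'$, by a given rule, if and only if $\unload(\state) \longmapsto_\CEK \unload(\state')$, by the corresponding rule. Since I may assume that lemma, I would invoke it here rather than re-prove each of the four transition cases, noting only that the extra store allocations performed by the \CESP\ rules are invisible to $\unload$ and hence do not disturb the correspondence.

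The conclusion then follows by induction on the length of the transition sequence: $\inj_\CESKP(\expr) \multistep_\CESKP \langle \den, \env, \store, \addr_0 \rangle$ holds if and only if $\inj_\CEK(\expr) \multistep_\CEK \unload(\langle \den, \env, \store, \addr_0\rangle) = \langle \den', \env', \mtk\rangle$, so one machine halts in a final (empty-continuation) state exactly when the other does, and both evaluators are undefined on exactly the same inputs. It remains to check that the two evaluators report the same term, i.e.\ that the store-threaded $\real$ of the \CESP\ machine agrees with the direct $\real$ of the CEK machine on corresponding final states; this is precisely the statement that $\real$ commutes with $\unload$, which follows from the same acyclic-dereferencing argument used to define $\unload$. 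I expect this last point---pinning down $\unload$ and proving $\real$ agrees through the store's indirection---to be the main obstacle, since it is where the level of indirection introduced by the store must be shown to be semantically transparent; everything else is bookkeeping licensed by the preceding lemma.
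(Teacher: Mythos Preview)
Your approach is correct and matches the paper's: the paper treats this lemma as an immediate corollary of the preceding lock-step lemma $\CESKP(\expr) \simeq \CEK(\expr)$ together with Lemma~1, which is exactly the transitivity you set up. The paper does not spell out the unloading map or the $\real$-commutativity argument you describe; your write-up is more detailed than the paper's one-line justification, but the underlying argument is the same.
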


\begin{figure}
\[
\begin{array}{@{}l|r@{}}

\multicolumn{2}{c}{\state \longmapsto_{\mathit{CESK}^*} \state'
\text{, where }\cont=\store(\addr),\addrnext \notin \dom(\sigma)} \\[1mm]
\hline & \\

\langle\vv, \env, \store, \addr\rangle &
\langle\den,\env', \store, \addr\rangle
\mbox{ where }(\den,\env') = \store(\env(\vv))
\\[1mm]
\langle\appform{\expr_0}{\expr_1}, \env, \store, \addr\rangle &
\langle\expr_0, \env, \store[\addrnext\mapsto \ark(\expr_1, \env, \addr)], \addrnext\rangle
\\[1mm]
\langle\den,\env,\store, \addr\rangle 
\\
\mbox{ if }\cont = \ark(\expr,\env',\addrnextnext) 
&
\langle\expr,\env',\store[\addrnext\mapsto \fnk(\den,\env,\addrnextnext)],\addrnext\rangle
\\
\mbox{ if } \cont = \fnk(\lamform{\vv}{\expr},\env',\addrnextnext) 
&
\langle\expr,\env'[\vv \mapsto \addrnext], \store[\addrnext \mapsto (\den,\env)], \addrnextnext\rangle
\end{array}
\]
\caption{CESK$^*$ machine.}
\label{fig:cesa}
\end{figure}

\noindent
\textbf{Addresses, abstraction and allocation}:
The CESK$^*$ machine, as defined in Figure~\ref{fig:cesa},
nondeterministically chooses addresses when it allocates a location in
the store, but because machines are identified up to consistent
renaming of addresses, the transition system remains deterministic.

Looking ahead, an easy way to bound the state-space of this machine is
to bound the set of addresses.
But once the store is finite, locations may need to be reused and when
multiple values are to reside in the same location; the store will
have to soundly approximate this by \emph{joining} the values.

In our concrete machine, all that matters about an allocation strategy
is that it picks an unused address.  In the abstracted machine
however, the strategy \emph{will all but certainly have to re-use previously allocated
  addresses}.  The abstract allocation strategy is therefore crucial
to the design of the analysis---it indicates when finite resources
should be doled out and decides when information should deliberately
be lost in the service of computing within bounded resources.  In
essence, the allocation strategy is the heart of an analysis.

For this reason, concrete allocation deserves a bit more attention in
the machine.  An old idea in program analysis is that dynamically
allocated storage can be represented by the state of the computation
at allocation time~\cite[Section 1.2.2]{dvanhorn:Jones1982Flexible,
dvanhorn:Midtgaard2010Controlflow}.  That is, allocation strategies
can be based on a (representation) of the machine history.  
Since machine histories are always fresh, we
we call them \emph{time-stamps}.

A common choice for a time-stamp, popularized by
Shivers~\cite{dvanhorn:Shivers:1991:CFA}, is to represent the history
of the computation as \emph{contours}, finite strings encoding the
calling context.
We present a concrete machine that uses a general time-stamp approach
and is parameterized by a choice of $\tick$ and $\alloc$ functions.

\subsection{Time-stamped CESK\secstar\ machine} 
\label{sec:secpt}

The machine states of the time-stamped \CESP{} machine include a
\emph{time} component, which is intentionally left unspecified:
\[
\begin{array}{r@{\;}c@{\;}l}
  \tm,\tmnext &\in &\s{Time}
  \\
  \state &\in& \State =
  \syn{Exp} \times \s{Env} \times \s{Store} \times \s{Addr} \times \s{Time}  
  \text.
\end{array}
\]
The machine is parameterized by the functions:
\begin{align*}
\tick &: \State \rightarrow \s{Time}
&
\alloc &: \State \to \s{Addr}
\text.
\end{align*}
The $\tick$ function returns the next time; the $\alloc$ function
allocates a fresh address for a binding or continuation.
We require of $\tick$ and $\alloc$ that for all $\tm$ and $\state$,
$\tm \sqsubset \tick(\state)$ and $\alloc(\state) \notin \sigma$ where
$\state = \langle \_,\_,\store,\_, \tm\rangle$.

The time-stamped \CESP{} machine is defined in Figure~\ref{fig:cesat}.
Note that occurrences of $\state$ on the right-hand side of this
definition are implicitly bound to the state occurring on the
left-hand side.
The evaluation function $\evf_\CESKPT$ and reachable states $\CESKPT$
are defined following the same outline as before and omitted for space.
The initial machine state is defined as:
\[
\inj_{\mathit{CESK^*_\tm}}(\expr) = \langle\expr,\mte,[\addr_0 \mapsto \mtk],\addr_0,\tm_0\rangle\text.
\]
\begin{figure}
\[
\begin{array}{@{}l@{\;}|@{\;}r@{}}

\multicolumn{2}{c}{\state \longmapsto_{\mathit{CESK}^*_\tm} \state'
\text{, where }\cont=\store(\addr), \addrnext=\alloc(\state), \tmnext=\tick(\state)} 
\\[1mm]
\hline \\
\langle\vv, \env, \store, \addr,\tm\rangle &
\langle\den,\env', \store, \addr,\tmnext\rangle
\mbox{ where }(\den,\env') = \store(\env(\vv))
\\[1mm]
\langle\appform{\expr_0}{\expr_1}, \env, \store, \addr,\tm\rangle 
&
\langle\expr_0, \env, \store[\addrnext\mapsto \ark(\expr_1, \env, \addr)], \addrnext,\tmnext\rangle
\\[1mm]
\langle\den,\env,\store, \addr,\tm\rangle 
\\
\mbox{if }\cont = \ark(\expr,\env,\addrnextnext) 
&
\langle\expr,\env,\store[\addrnext\mapsto \fnk(\den,\env,\addrnextnext)],\addrnext,\tmnext\rangle
\\
\mbox{if }\cont = \fnk(\lamform{\vv}{\expr},\env',c) 
& 
\langle\expr,\env'[\vv \mapsto b], \store[b \mapsto (\den,\env)], c,\tmnext\rangle
\end{array}
\]
\caption{Time-stamped CESK$^*$ machine.}
\label{fig:cesat}
\end{figure}

Satisfying definitions for the parameters are:
\begin{gather*}
\s{Time} = \s{Addr} = \mathbb{Z}
\\
\begin{align*}
\addr_0 = \tm_0 &= 0
&
\tick\langle \_, \_, \_, \_, \tm\rangle &= \tm+1
&
\alloc \langle \_, \_, \_, \_, \tm\rangle &= \tm
\text.
\end{align*}
\end{gather*}
Under these definitions, the time-stamped CESK$^*$ machine
operates in lock-step with the CESK$^*$ machine, and therefore
with the CEK machine, implying its correctness.
\begin{lemma}
\label{lem:time-stamp-equiv}
$\CESKPT(\expr) \simeq
\CESKP(\expr)$.
\end{lemma}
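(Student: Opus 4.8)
The plan is to exhibit a lock-step correspondence (a functional strong bisimulation) between the time-stamped $\CESKPT$ machine and the $\CESKP$ machine, and then lift it to the reachable-state sets. Concretely, define the \emph{time-erasure} map $\pi$ sending a time-stamped state $\langle\expr,\env,\store,\addr,\tm\rangle$ to the $\CESKP$ state $\langle\expr,\env,\store,\addr\rangle$, simply forgetting the $\s{Time}$ component. Since continuations and storable values carry no time stamps, $\pi$ acts as the identity on the expression, environment, store, and continuation-pointer components, so it is a well-defined map from $\CESKPT$ states to $\CESKP$ states.

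First I would prove the two-directional step lemma: $\state\longmapsto_{\CESKPT}\state'$ holds iff $\pi(\state)\longmapsto_{\CESKP}\pi(\state')$ holds (the latter read up to the convention that addresses are consistent-renaming classes). This is a case analysis on the four transition rules. In every case the two machines perform syntactically identical rewrites of the $\syn{Exp}\times\s{Env}\times\s{Store}\times\s{Addr}$ portion of the state; the only differences are that the $\CESKPT$ rule additionally advances the clock to $\tmnext=\tick(\state)$—a component erased by $\pi$—and that it commits to the concrete address $\addrnext=\alloc(\state)=\tm$ where $\CESKP$ chooses some fresh $\addrnext\notin\dom(\store)$ nondeterministically.

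The forward direction is then immediate. For the backward direction I would use the freshness requirement on the parameters: because $\tm\sqsubset\tick(\state)$ and every allocation uses the current time as its address, a short induction establishes the invariant that at each reachable $\CESKPT$ state $\langle\_,\_,\store,\_,\tm\rangle$ one has $\tm\notin\dom(\store)$ (indeed every address in $\dom(\store)$ is strictly below $\tm$). Hence $\alloc(\state)=\tm$ always yields an unused location, exactly matching $\CESKP$'s side condition $\addrnext\notin\dom(\store)$. Because $\CESKP$ states are identified up to consistent renaming of addresses, the particular $\CESKP$ successor that renames the freshly chosen address to $\tm$ is the one hit by $\pi(\state')$, so each $\CESKP$ step out of $\pi(\state)$ is realized by a unique $\CESKPT$ step out of $\state$.

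Finally I would lift the step correspondence to reachable states. Since $\pi(\inj_{\CESKPT}(\expr))=\inj_{\CESKP}(\expr)$ (erasing the harmless $\tm_0=0$), a routine induction on the length of transition sequences shows that $\pi$ restricts to a bijection between $\{\state\mid \inj_{\CESKPT}(\expr)\multistep_{\CESKPT}\state\}$ and $\{\bar\state\mid \inj_{\CESKP}(\expr)\multistep_{\CESKP}\bar\state\}$, that is, between $\CESKPT(\expr)$ and $\CESKP(\expr)$, which is precisely the asserted $\simeq$. I expect the main obstacle to be the backward direction of the step lemma: reconciling $\CESKP$'s nondeterministic fresh-address allocation with the deterministic integer scheme $\alloc(\state)=\tm$. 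This is where the ``identified up to consistent renaming of addresses'' convention and the freshness invariant must be combined carefully; the remaining cases are mechanical.
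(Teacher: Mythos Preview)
Your approach matches the paper's: the paper gives no proof for this lemma beyond the preceding sentence asserting that the two machines ``operate in lock-step,'' and your time-erasure bisimulation is exactly the standard way to make that assertion precise.

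One small wrinkle in your freshness invariant: as stated, ``every address in $\dom(\store)$ is strictly below $\tm$'' fails at the base case, since the paper fixes $\addr_0 = \tm_0 = 0$ and $\inj_{\CESKPT}(\expr)$ has store $[0 \mapsto \mtk]$ with time $0$. This is an off-by-one artifact of the paper's particular instantiation rather than a flaw in your strategy; you can sidestep it by appealing directly to the paper's abstract requirement $\alloc(\state) \notin \dom(\store)$ (which is what the lemma is really parameterized over) instead of re-deriving it from the integer scheme.
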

\noindent
The time-stamped CESK$^*$ machine forms the basis of our
abstracted machine in the following section.

\subsection{Abstract time-stamped CESK\secstar\ machine}
\label{sec:acespt}

As alluded to earlier, with the time-stamped CESK$^*$ machine, we
now have a machine ready for direct abstract interpretation via a
single point of approximation: the store.
Our goal is a machine that resembles the
time-stamped CESK$^*$ machine, but operates over a finite
state-space and it is allowed to be nondeterministic.
Once the state-space is finite, the transitive
closure of the transition relation becomes computable, and this
transitive closure constitutes a static analysis.
Buried in a path through the transitive closure is a possibly
infinite traversal that corresponds to the concrete execution of the
program.

The abstracted variant of the time-stamped CESK$^*$ machine comes from
bounding the address space of the store and the number of times
available.
By bounding the address space, the whole state-space becomes
finite. (Syntactic sets like $\syn{Exp}$ are infinite, but finite for
any given program.)
For the purposes of soundness, an entry in the store may be forced to
hold several values simultaneously:
\[
\begin{grammar}
  \astore &\in &\sa{Store} &=& \s{Addr} \parto \Pow{\Storable}
  \text.
\end{grammar}
\]
Hence, stores now map an address to a \emph{set} of
storable values rather than a single value.  These
collections of values model approximation in the analysis.  If a
location in the store is re-used, the new value is joined with the
current set of values.  When a location is dereferenced, the analysis
must consider any of the values in the set as a result of the
dereference.

The abstract time-stamped CESK$^*$ machine is defined in Figure~\ref{fig:acesat}.
The non-deterministic abstract transition relation changes little compared with
the concrete machine.
We only have to modify it to account for the possibility that multiple
storable values, which includes continuations, may reside together in
the store.  We handle this situation by letting the machine
non-deterministically choose a particular value from the set at a
given store location.

\begin{figure}
\[
\begin{array}{@{}l@{\;}|@{\;}r@{}}

\multicolumn{2}{@{}c@{}}{\astate \longmapsto_{\widehat{\mathit{CESK}^*_\tm}} \astate'
\text{, where } \cont\in\astore(\addr), \addrnext=\aalloc(\astate,\cont), \tmnext=\atick(\astate,\cont)}
\\[2mm]
\hline & \\
\langle\vv, \env, \astore, \addr,\tm\rangle &
\langle \den,\env', \astore, \addr,\tmnext\rangle
\text{ where } (\den,\env') \in \astore(\env(\vv))
\\[1mm]
\langle\appform{\expr_0}{\expr_1}, \env, \astore, \addr,\tm\rangle 
&
\langle\expr_0, \env, \astore \join [\addrnext\mapsto \ark(\expr_1, \env, \addr)], \addrnext,\tmnext\rangle
\\[1mm]
\langle \den,\env,\astore, \addr,\tm\rangle 
\\
\mbox{if }\cont= \ark(\expr,\env',\addrnextnext)
&
\langle\expr,\env',\astore \join [\addrnext\mapsto \fnk(\den,\env,\addrnextnext)],\addrnext,\tmnext\rangle
\\
\mbox{if }\cont=\fnk(\lamform{\vv}{\expr},\env',\addrnextnext)
&
\langle\expr,\env'[\vv \mapsto \addrnext], \astore \join [\addrnext \mapsto (\den,\env)], \addrnextnext,\tmnext\rangle
\end{array}
\]
\caption{Abstract time-stamped CESK$^*$ machine.}
\label{fig:acesat}
\end{figure}

The analysis is parameterized by abstract variants of the functions
that parameterized the concrete version:
\begin{align*}
\atick &: \aState \times \s{Cont} \rightarrow \s{Time}\text,
&
\aalloc &: \aState \times \s{Cont} \to \s{Addr}
\text.
\end{align*}
In the concrete, these parameters determine allocation and
stack behavior.
In the abstract, they are the arbiters of precision: they determine
when an address gets re-allocated, how many addresses get allocated,
and which values have to share addresses.

Recall that in the concrete semantics, these functions consume
states---not states and continuations as they do here.
This is because in the concrete, a state alone suffices since the
state determines the continuation.
But in the abstract, a continuation pointer within a state may denote a
multitude of continuations; however the transition relation is
defined with respect to the choice of a particular one.
We thus pair states with continuations to encode the choice.

The \emph{abstract} semantics is given by the reachable states:
\[
\ACESKPT(\expr) = \{ \astate\ |\ \absmap(\inj_\CESKPT(\expr)) \multistep_{\widehat{\mathit{CESK}^*_\tm}} \astate \}\text.
\]

\noindent
\textbf{Soundness and decidability}:
We have endeavored to evolve the abstract machine gradually so that
its fidelity in soundly simulating the original CEK machine is both
intuitive and obvious.
To formally establish soundness of the abstract time-stamped
\CESP{} machine, we use an abstraction function, defined in
Figure~\ref{fig:abs-map}, from the state-space of the concrete
time-stamped machine into the abstracted state-space.

\begin{figure}
\begin{align*}
  \absmap(\expr,\env,\store,\addr,\tm) &= (e,\absmap(\env),\absmap(\store),\absmap(\addr),\absmap(\tm)) 
    && \text{[states]}
  \\
  \absmap(\env) &= \lambda \vv . \absmap(\env(\vv)) && \text{[environments]}
\\
  \absmap(\store) &= \lambda \aaddr . \!\!\!\! \bigjoin_{\absmap(\addr) = \aaddr} \!\!\!\! \set{\absmap(\store(\addr))} && \text{[stores]}  
\\
  \absmap(\lamform{\vv}{\expr},\env) &= (\lamform{\vv}{\expr},\absmap(\env)) && \text{[closures]}
  \\
  \absmap(\mtk) &= \mtk && \text{[continuations]}
  \\
  \absmap(\ark(\expr,\env,\addr)) &= \ark(\expr,\absmap(\env),\absmap(\addr))
  \\
  \absmap(\fnk(\den,\env,\addr)) &= \fnk(\den,\absmap(\env),\absmap(\addr))
\end{align*}
\caption{Abstraction map, $\absmap :
  \State_{\mathit{CESK}^*_\tm} \rightarrow
  \aState_{\widehat{\mathit{CESK}^*_\tm}}$.}
\label{fig:abs-map}
\end{figure}

The abstraction map over times and addresses is defined so that
the parameters $\aalloc$ and $\atick$ are
sound simulations of the parameters $\alloc$
and $\tick$, respectively.
We also define the partial order $(\wt)$ on the abstract state-space
as the natural point-wise, element-wise, component-wise and
member-wise lifting, wherein the partial orders on the sets
$\syn{Exp}$ and $\s{Addr}$ are flat.
Then, we can prove that abstract machine's transition relation
simulates the concrete machine's transition relation.
\begin{theorem}[Soundness]\ \\ 
  If $\state \longmapsto_{\mathit{CEK}} \state'$ and
  $\alpha(\state) \wt \astate$, then there exists an abstract state
  $\astate'$, such that $\astate
  \longmapsto_{\widehat{\mathit{CESK}}^*_\tm} \astate'$ and
  $\alpha(\state') \wt \astate'$.
\end{theorem}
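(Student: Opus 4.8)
The plan is to prove the statement as a single-step forward simulation, reducing it first to the time-stamped $\CESP$ machine, since this is the concrete machine that $\absmap$ actually abstracts. By Lemma~\ref{lem:pointer-equiv} and Lemma~\ref{lem:time-stamp-equiv} the CEK, $\CESP$, and time-stamped $\CESP$ machines run in lock-step, so a transition $\state \longmapsto_\CEK \state'$ transports to the corresponding transition $\state \longmapsto_{\CESKPT} \state'$ on time-stamped states, on which $\absmap$ is defined. It therefore suffices to show: if $\state \longmapsto_{\CESKPT} \state'$ and $\absmap(\state) \wt \astate$, then $\astate \longmapsto_{\ACESKPT} \astate'$ with $\absmap(\state') \wt \astate'$ for some $\astate'$. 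I would prove this by case analysis on which of the four rules of Figure~\ref{fig:cesat} fires, first treating the tight case $\astate = \absmap(\state)$ and then lifting to arbitrary $\astate \st \absmap(\state)$ by monotonicity, observing that the only component in which $\astate$ may exceed $\absmap(\state)$ is the store, on which both the rule and the abstract parameters act monotonically.

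Before the case analysis I would isolate three facts. First, a store--abstraction commutation lemma: for every concrete store update,
\[ \absmap(\store[\addr \mapsto \sto]) \wt \absmap(\store) \join [\absmap(\addr) \mapsto \set{\absmap(\sto)}] , \]
where the inequality, rather than equality, arises precisely because the concrete update overwrites location $\addr$ while both $\absmap$ and the abstract transition \emph{join}, and because several concrete addresses may collapse onto one abstract address. Second, the hypothesis $\absmap(\state) \wt \astate$ means---the orders on $\syn{Exp}$, $\Env$, $\s{Addr}$, and $\s{Time}$ being flat---that $\astate$ agrees with $\absmap(\state)$ on control, environment, continuation pointer, and time, while its store is pointwise $\st \absmap(\store)$; in particular, any closure or continuation the concrete step reads from $\store$ has its abstraction \emph{already present} in the corresponding set of $\astate$, so the non-deterministic choices $(\den,\env')\in\astore(\env(\vv))$ and $\cont\in\astore(\addr)$ of Figure~\ref{fig:acesat} can be instantiated to match the concrete step. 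Third, I would invoke the stated soundness of the parameters: $\absmap(\alloc(\state)) = \aalloc(\absmap(\state),\absmap(\cont))$ and $\absmap(\tick(\state)) = \atick(\absmap(\state),\absmap(\cont))$ for the continuation $\cont$ read by the step, which is exactly why these abstract parameters must be passed the chosen continuation.

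With these in hand each case is mechanical. For the application rule, from $\astate$ I fire the matching abstract rule, choosing $\cont = \absmap(\store(\addr))$; the resulting abstract address and time match $\absmap(\addrnext)$ and $\absmap(\tmnext)$ by parameter soundness, the control $\expr_0$ and environment $\absmap(\env)$ match on the nose, and the new store $\astore \join [\absmap(\addrnext) \mapsto \set{\ark(\expr_1,\absmap(\env),\absmap(\addr))}]$ dominates $\absmap(\store[\addrnext \mapsto \ark(\expr_1,\env,\addr)])$ by the commutation lemma together with monotonicity of $\join$ in $\astore \st \absmap(\store)$. The variable and $\fnk$ cases additionally use the choice-availability observation to select the right closure or continuation from the abstract set, and the $\ark$ case is analogous to application; in every case the only non-flat component that can grow is the store, so establishing the store inequality establishes $\absmap(\state') \wt \astate'$.

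The main obstacle is not any single case but getting the store discipline exactly right: I must verify that replacing destructive concrete updates with joins, compounded by the many-to-one collapse of addresses under $\absmap$, never drops a binding the concrete successor relies on---this is the content of the commutation lemma and is where the $\wt$ (rather than $=$) direction is essential. Closely related, and the real conceptual crux, is arranging the $\fnk$ case so that $\aalloc$ and $\atick$ reproduce $\absmap$ of the concrete allocation and time: because a single abstract continuation pointer denotes a \emph{set} of continuations, the abstract parameters cannot recover the needed continuation from the state alone, which is exactly why the transition pairs the state with the chosen $\cont$ and why parameter soundness must be phrased over $\aState \times \s{Cont}$.
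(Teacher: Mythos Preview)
Your proposal is correct and follows essentially the same route as the paper: reduce to $\longmapsto_{\CESKPT}$ via Lemmas~\ref{lem:pointer-equiv} and~\ref{lem:time-stamp-equiv}, then case-split on the four transition rules, with the application rule going through by calculation and the remaining three hinging on the abstract store over-approximating the concrete one. Your treatment is considerably more explicit than the paper's terse sketch---the store-commutation lemma, the flat-order observation, and the parameter-soundness invocation are all details the paper leaves implicit---but the underlying argument is the same.
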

\begin{proof}
  By Lemmas~\ref{lem:pointer-equiv}
  and~\ref{lem:time-stamp-equiv}, it suffices to prove soundness with
  respect to $\longmapsto_{\mathit{CESK}^*_\tm}$.
  Assume $\state \longmapsto_{\mathit{CESK}^*_\tm} \state'$ and
  $\alpha(\state) \wt \astate$.
  Because $\state$ transitioned, exactly one of the rules from the definition of
  $(\longmapsto_{\mathit{CESK}^*_\tm})$ applies.
  We split by cases on these rules.
  The rule for the second case is deterministic and follows by calculation.
  For the remaining (nondeterministic) cases, we must show
  an abstract state exists such that the simulation is
  preserved.
  By examining the rules for these cases, we see that all three hinge
  on the abstract store in $\astate$ soundly approximating the
  concrete store in $\state$, which follows from the assumption that
  $\absmap(\state) \wt \astate$.
\end{proof}

\begin{theorem}[Decidability]\ \\ 
Membership
of $\astate$ in $\ACESKPT(\expr)$ is decidable.
\end{theorem}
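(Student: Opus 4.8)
The plan is to argue that the abstract state space is finite for any fixed program $\expr$, and that the reachable-states set $\ACESKPT(\expr)$ can therefore be computed by a terminating fixed-point iteration, reducing membership to a finite look-up.

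First I would establish that $\aState$, restricted to states relevant to $\expr$, is finite. The abstraction that defines the machine bounds both $\s{Addr}$ and $\s{Time}$ to finite sets, and these are the only two sets whose finiteness is imposed externally; everything else is built from them together with the finite syntactic sets associated with $\expr$. Concretely, only finitely many expressions (the subexpressions of $\expr$) and variables (those occurring in $\expr$) can appear in a reachable state, so the relevant $\syn{Exp}$ and $\syn{Var}$ are finite. Environments $\s{Env} = \syn{Var} \parto \s{Addr}$ are then finite maps from a finite set into a finite set, hence finite in number. Closures pair a program value with an environment, and so range over a finite set; continuation frames $\ark$ and $\fnk$ pair finite expressions, values, and environments with an address, so $\s{Cont}$ is finite; thus $\Storable$ is finite. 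Stores $\sa{Store} = \s{Addr} \parto \Pow{\Storable}$ are finite maps into a finite powerset, hence finite. The observation that keeps this argument non-circular is that addresses form a flat finite set containing no nested machine structure: the indirection through the store, introduced in Section~\ref{sec:ceskp} precisely to untie the recursion in environments and continuations, is exactly what lets me bound each set in turn rather than chasing an unbounded recursion.

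Next I would observe that the abstract transition relation $\longmapsto_{\ACESKPT}$ is decidable and finitely branching: the only source of nondeterminism is the choice of $\cont \in \astore(\addr)$ from a finite set, and the parameters $\aalloc$ and $\atick$ are computable. Hence the image of any finite set of states under one step of the relation is a computable finite set.

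Finally, since $\ACESKPT(\expr)$ is the set of states reachable from the single initial state $\absmap(\inj_\CESKPT(\expr))$ under $\longmapsto_{\ACESKPT}$, and this set is contained in the finite space $\aState$ (restricted to $\expr$), I would compute it as the least fixed point of the monotone step-and-accumulate operator on the finite lattice $\Pow{\aState}$. Iteration of a monotone map on a finite lattice reaches its least fixed point in finitely many steps, so the full reachable set is computed after finitely many iterations; membership of $\astate$ then reduces to checking $\astate$ against this finite set. I expect the main obstacle to be the finiteness argument of the second paragraph---specifically, making rigorous that the apparent mutual recursion among $\s{Env}$, $\s{Cont}$, $\Storable$, and $\sa{Store}$ genuinely bottoms out at the flat finite sets $\s{Addr}$ and $\s{Time}$---rather than the termination argument, which is the standard observation about monotone maps on finite lattices.
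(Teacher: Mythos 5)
Your proposal is correct and follows essentially the same route as the paper's proof: the paper likewise argues that the abstract state-space is non-recursive with finite sets at the leaves (given finite addresses), hence finite, and concludes that reachability is decidable. You have simply filled in the details the paper leaves implicit---the case-by-case finiteness count for environments, continuations, and stores, the computability of $\aalloc$ and $\atick$, and the standard fixed-point iteration on $\Pow{\aState}$---which is a faithful elaboration rather than a different argument.
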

\begin{proof}
  The state-space of the machine is non-recursive with finite sets at the leaves
  on the assumption that addresses are finite.  Hence 
  reachability is decidable since the abstract state-space is finite.
\end{proof}

\section{Abstract state and control}
\label{sec:realistic-features}

We have shown that store-allocated continuations make abstract
interpretation of the CESK$^\star$ machine 
straightforward.
In this section, we want to show that the tight correspondence between
concrete and abstract persists after the addition of language
features such as conditionals, side effects, and first-class
continuations.
We tackle each feature, and present the additional machinery required
to handle each one.
In most cases, the path from a canonical concrete machine to
pointer-refined abstraction of the machine is so simple we only show
the abstracted system.
In doing so, we are arguing that this abstract machine-oriented
approach to abstract interpretation represents a flexible and viable
framework for building program analyses.

To handle conditionals, we extend the language with a new syntactic
form, $\ifform{\expr}{\expr}{\expr}$, and introduce a base value
$\schfalse$, representing false.
Conditional expressions induce a new continuation form:
$\mathbf{if}(\expr'_0,\expr'_1,\env,\addr)$, which represents the
evaluation context $E[\ifform{[\;]}{\expr_0}{\expr_1}]$ where $\env$
closes $\expr'_0$ to represent $\expr_0$, $\env$ closes $\expr'_1$ to
represent $\expr_1$, and $\addr$ is the address of the representation
of $E$.

Side effects are fully amenable to our approach;
we introduce Scheme's \texttt{set!} for mutating variables using the
$\appform{{\tt set!}\;}{\vv\; \expr}$ syntax.  The \texttt{set!}  form
evaluates its subexpression $\expr$ and assigns the value to the
variable $\vv$.  Although \texttt{set!} expressions are evaluated for
effect, we follow Felleisen \etal~and specify \texttt{set!}
expressions evaluate to the value of $\vv$ before it was
mutated~\cite[page 166]{dvanhorn:Felleisen2009Semantics}.  The
evaluation context $E[\appform{{\tt set!}\;}{\vv\; [\;]}]$ is
represented by $\mathbf{set}(\addr_0,\addr_1)$, where $\addr_0$ is the
address of $\vv$'s value and $\addr_1$ is the address of the
representation of $E$.

First-class control is introduced by adding a new base value {\tt
  callcc} which reifies the continuation as a new kind of applicable
value.  Denoted values are extended to include representations of
continuations.  Since continuations are store-allocated, we choose to
represent them by address.  When an address is applied, it represents
the application of a continuation (reified via {\tt callcc}) to a
value.  The continuation at that point is discarded and the applied
address is installed as the continuation.

The resulting grammar is:
\[
\begin{grammar}
 \expr &\in& \syn{Exp} &\produces& \dots \opor \ifform{\expr}{\expr}{\expr}
 \opor \appform{{\tt set!}\;}{\vv\; \expr}
\\
  \kappa &\in& \s{Cont} &\produces&
  \dots \opor \mathbf{if}(\expr,\expr,\env,\addr) \opor \mathbf{set}(\addr,\addr)
\\
  \den &\in& \Den &\produces& \dots \opor \schfalse \opor  {\tt callcc} \opor \addr
  \text.
\end{grammar}
\]
We show only the abstract transitions, which result from
store-allocating continuations, time-stamping, and abstracting the
concrete transitions for conditionals, mutation, and control.
The first three machine transitions deal with conditionals; here we
follow the Scheme tradition of considering all non-false values as
true.
The fourth and fifth transitions deal with mutation.

\begin{figure}
\[
\begin{array}{@{}l@{\;}|@{\;}r@{}}

\multicolumn{2}{@{}c@{}}{\astate \longmapsto_{\widehat{\mathit{CESK}^*_\tm}} \astate'
\text{,\;where\;} 
\cont \in \astore(\addr), 
\addrnext = \aalloc(\astate,\cont),
\tmnext = \atick(\astate,\cont)
 } \\[2mm]
\hline & \\

\langle\ifform{\expr_0}{\expr_1}{\expr_2},\env,\astore,\addr,\tm\rangle
&
\langle\expr_0,\env,\astore \sqcup [\addrnext\mapsto \mathbf{if}(\expr_1,\expr_2,\env,\addr)],\addrnext,\tmnext\rangle
\\[1mm]
\langle\schfalse,\env,\astore,\addr,\tm\rangle
&
\langle\expr_1,\env',\astore,\addrnextnext,\tmnext\rangle
\\
\mbox{if }\cont=
   \mathbf{if}(\expr_0,\expr_1,\env',\addrnextnext)
\\[1mm]
\langle\den,\env,\astore,\addr,\tm\rangle
&
\langle\expr_0,\env',\astore,\addrnextnext,\tmnext\rangle
\\
\mbox{if }\cont=
   \mathbf{if}(\expr_0,\expr_1,\env',\addrnextnext),
\\
\text{and } 
\den\neq\schfalse
\\[1mm]
\langle\appform{{\tt set!}\;}{\vv\; \expr},\env,\astore,\addr,\tm\rangle
&
\langle\expr,\env,\astore \sqcup [\addrnext\mapsto \mathbf{set}(\env(\vv), \addr)],\addrnext,\tmnext\rangle
\\[1mm]
\langle\den,\env,\astore,\addr,\tm\rangle
&
\langle\den',\env,\astore \join [\addr' \mapsto \den],\addrnextnext,\tmnext\rangle
\\
\mbox{ if }\cont=
   \mathbf{set}(\addr',\addrnextnext) 
& \mbox{where }\den' \in \astore(\addr')
\\[1mm]
\langle \lamform{\vv}{\expr},\env,\astore,\addr,\tm\rangle 
&
\langle \expr,\env[\vv\mapsto \addrnext],\astore \join [\addrnext\mapsto \addrnextnext],\addrnextnext,\tmnext\rangle
\\
\mbox{if }\cont= \fnk({\tt callcc},\env',\addrnextnext)
&
\mbox{where }\addrnextnext=\aalloc(\astate,\cont)
\\[1mm]
\langle \addrnextnext,\env,\astore,\addr,\tm\rangle 
&
\langle \addr,\env,\astore,\addrnextnext,\tmnext\rangle 
\\
\mbox{if }\cont= \fnk({\tt callcc},\rho',\addr')
\\[1mm]
\langle \den,\env,\astore,\addr,\tm\rangle 
&
\langle \den,\env,\astore,\addrnextnext,\tmnext\rangle 
\\
\mbox{if }\cont= \fnk(\addrnextnext,\env',\addr')
\end{array}
\]
\caption{Abstract extended CESK$^*$ machine.}
\end{figure}

The remaining three transitions deal with first-class control.
In the first of these, {\tt callcc} is being applied to a closure
value $\den$.  The value $\den$ is then ``called with the current
continuation'', \ie, $\den$ is applied to a value that represents the
continuation at this point.
In the second, {\tt callcc} is being applied to a continuation
(address).  When this value is applied to the reified continuation, it
aborts the current computation, installs itself as the current
continuation, and puts the reified continuation ``in the hole''.
%
Finally, in the third, a continuation is being applied; $\addrnextnext$ gets
thrown away, and $\den$ gets plugged into the continuation $\addrnext$.
In all cases, these transitions result from pointer-refinement,
time-stamping, and abstraction of the usual machine
transitions.

\section{Abstract garbage collection}
\label{sec:garbage-collection}

Garbage collection determines when a store location has become
unreachable and can be re-allocated.
This is significant in the abstract semantics because an address may be
allocated to multiple values due to finiteness of the address space.
Without garbage collection, the values allocated to this common
address must be joined, introducing imprecision in the analysis (and
inducing further, perhaps spurious, computation).
By incorporating garbage collection in the abstract semantics, the
location may be proved to be unreachable and safely \emph{overwritten}
rather than joined, in which case no imprecision is introduced.

Like the rest of the features addressed in this paper, we can
incorporate abstract garbage collection into our static analyzers by a
straightforward pointer-refinement of textbook accounts of concrete
garbage collection, followed by a finite store abstraction.

Concrete garbage collection is defined in terms of a GC machine that
computes the reachable addresses in a store~\cite[page
  172]{dvanhorn:Felleisen2009Semantics}:
\[
\begin{array}{l}
\langle\mathcal{G},\mathcal{B},\store\rangle 
\longmapsto_{\mathit{GC}}
\langle(\mathcal{G}\cup \liveloc_\store(\store(\addr)) \setminus
(\mathcal{B}\cup\{\addr\})), \mathcal{B}\cup\{\addr\}, \store\rangle
\\
\mbox{if }\addr \in \mathcal{G}\text.
\end{array}
\]
This machine iterates over a set of reachable but unvisited ``grey''
locations $\mathcal{G}$.  On each iteration, an element is removed and
added to the set of reachable and visited ``black'' locations
$\mathcal{B}$.  Any newly reachable and unvisited locations, as
determined by the ``live locations'' function $\liveloc_\store$, are
added to the grey set.  When there are no grey locations, the black
set contains all reachable locations.  Everything else is garbage.

The live locations function computes a set of locations which may be
used in the store.  
 Its definition varies based on the machine being garbage collected,
but the definition appropriate for the CESK$^*$ machine of
Section~\ref{sec:ceskp} is:
\begin{align*}
\liveloc_\store(\expr) &= \emptyset
\\
\liveloc_\store(\expr,\env) &= \liveloc_\store(\restrict{\env}{\fv(\expr)})
\\
\liveloc_\store(\env) &= \mathit{rng}(\env)
\\
\liveloc_\store(\mtk) &= \emptyset
\\
\liveloc_\store(\fnk(\den,\env,\addr)) &= \{\addr\} \cup \liveloc_\store(\den,\env) \cup \liveloc_\store(\store(\addr))
\\
\liveloc_\store(\ark(\expr,\env,\addr)) &= \{\addr\}\cup \liveloc_\store(\expr,\env) \cup \liveloc_\store(\store(\addr))\text.
\end{align*}
We write $\restrict\env{\fv(\expr)}$ to
mean $\env$ restricted to the domain of free variables in $\expr$.
We assume the least-fixed-point solution in the calculation of the
function $\liveloc$ in cases where it recurs on itself.

The pointer-refinement requires parameterizing the
$\liveloc$ function with a store used to resolve pointers to
continuations. 
A nice consequence of this parameterization is that we can re-use
$\liveloc$ for \emph{abstract garbage collection} by supplying it an
abstract store for the parameter.
Doing so only necessitates extending $\liveloc$ to the case of sets of
storable values:
\begin{align*}
\liveloc_\store(S) &= \bigcup_{s\in S} \liveloc_\store(s)
\end{align*}

The CESK$^*$ machine incorporates garbage collection by a
transition rule that invokes the GC machine as a subroutine to remove
garbage from the store (Figure~\ref{fig:gc}).
The garbage collection transition introduces non-determinism to the
CESK$^*$ machine because it applies to any machine state and thus
overlaps with the existing transition rules.
The non-determinism is interpreted as leaving the choice of
\emph{when} to collect garbage up to the machine.

The abstract CESK$^*$ incorporates garbage collection by the
\emph{concrete garbage collection transition}, \ie, we re-use the
definition in Figure~\ref{fig:gc} with an abstract store, $\astore$,
in place of the concrete one.
Consequently, it is easy to verify abstract garbage collection
approximates its concrete counterpart.

\begin{figure}
\[
\begin{array}{@{}l@{\qquad}|@{\qquad}r@{}}

\multicolumn{2}{c}{\state \longmapsto_{\mathit{CESK}^*} \state'} \\[1mm]
\hline & \\

\langle \expr,\env,\store,\addr\rangle
\qquad\qquad\qquad
&
\langle \expr,\env,\{\langle\addrnext,\store(\addrnext)\rangle\ |\ \addrnext \in \mathcal{L}\},\addr\rangle
\\
\multicolumn{2}{@{}l}{
\mbox{if }\langle\liveloc_\store(\expr,\env) \cup \liveloc_\store(\store(\addr)),\{\addr\},\store\rangle \multistep_{\mathit{GC}} \langle\emptyset,\mathcal{L},\store\rangle}
\end{array}
\]
\caption{GC transition for the CESK$^*$ machine.}
\label{fig:gc}
\end{figure}

The CESK$^*$ machine may collect garbage at any point in the
computation, thus an abstract interpretation must soundly approximate
\emph{all possible choices} of when to trigger a collection, which 
the abstract CESK$^*$ machine does correctly.
This may be a useful analysis \emph{of} garbage collection, however it
fails to be a useful analysis \emph{with} garbage collection: for
soundness, the abstracted machine must consider the case in which
garbage is never collected, implying no storage is reclaimed to
improve precision.

However, we can leverage abstract garbage collection to reduce the
state-space explored during analysis and to improve precision and
analysis time.
This is achieved (again) by considering properties of the
\emph{concrete} machine, which abstract directly; in this case,
we want the concrete machine to deterministically collect garbage.
Determinism of the CESK$^*$ machine is restored by defining the
transition relation as a non-GC transition (Figure~\ref{fig:cesa})
followed by the GC transition (Figure~\ref{fig:gc}).
This state-space of this concrete machine is ``garbage free'' and
consequently the state-space of the abstracted machine is ``abstract
garbage free.''

In the concrete semantics, a nice consequence of this property is that
although continuations are allocated in the store, they are
deallocated as soon as they become unreachable, which corresponds to
when they would be popped from the stack in a non-pointer-refined
machine.  Thus the concrete machine really manages continuations like
a stack.

Similarly, in the abstract semantics, continuations are deallocated as
soon as they become unreachable, which often corresponds to when they
would be popped.  We say often, because due to the finiteness of the
store, this correspondence cannot always hold.
However, this approach gives a good finite approximation to infinitary
stack analyses that can always match calls and returns.

\section{Related work}

The study of abstract machines for the $\lambda$-calculus began with
Landin's SECD machine~\cite{dvanhorn:landin-64}, the systematic
construction of machines from semantics with Reynolds's definitional
interpreters~\cite{dvanhorn:reynolds-acm72}, the theory of abstract
interpretation with the seminal work of Cousot and
Cousot~\cite{dvanhorn:Cousot:1977:AI}, and static analysis of the
$\lambda$-calculus with Jones's coupling of abstract machines and
abstract interpretation~\cite{dvanhorn:Jones:1981:LambdaFlow}.
All have been active areas of research since their inception,
but only recently have well known abstract machines been connected
with abstract interpretation by Midtgaard and
Jensen~\cite{dvanhorn:midtgaard-jensen-sas-08,dvanhorn:Midtgaard2009Controlflow}.
We strengthen the connection by demonstrating a general technique for
abstracting abstract machines.

The approximation of abstract machine states for the analysis of
higher-order languages goes back to
Jones~\cite{dvanhorn:Jones:1981:LambdaFlow},
who argued abstractions of regular tree automata could solve the
problem of recursive structure in environments.
We re-invoked that wisdom to eliminate the recursive structure of
continuations by allocating them in the store.

Midtgaard and Jensen present a 0CFA for a CPS
language~\cite{dvanhorn:midtgaard-jensen-sas-08}.  The approach is
based on Cousot-style calculational abstract
interpretation~\cite{dvanhorn:Cousot98-5}, applied to a functional
language.  Like the present work, Midtgaard and Jensen start with a
known abstract machine for the concrete semantics, the CE machine of
Flanagan, \etal~\cite{dvanhorn:Flanagan1993Essence}, and employ a
reachable-states model.  They then compose well-known Galois
connections to reveal a 0CFA with reachability in the style of
Ayers~\cite{dvanhorn:ayers-phd93}.  The CE machine is not sufficient
to interpret direct-style programs, so the analysis is specialized to
programs in continuation-passing style.

Although our approach is not calculational like Midtgaard and
Jensen's, it continues in their vein by applying abstract
interpretation to well known machines, extending the application to
direct-style machines to obtain a parameterized family of analyses
that accounts for polyvariance.

Static analyzers typically hemorrhage precision in the presence of
exceptions and first-class continuations:
they jump to the top of the lattice of approximation when these
features are encountered.
Conversion to continuation- and exception-passing style can handle
these features without forcing a dramatic ascent of the lattice of
approximation~\cite{dvanhorn:Shivers:1991:CFA}.
The cost of this conversion, however, is lost knowledge---both
approaches obscure static knowledge of stack structure, by desugaring
it into syntax.

Might and Shivers introduced the idea of using abstract garbage
collection to improve precision and efficiency in flow
analysis~\cite{dvanhorn:Might:2006:GammaCFA}.
They develop a garbage collecting abstract machine for a CPS language
and prove it correct.  We extend abstract garbage collection to
direct-style languages interpreted on the CESK machine.

\section{Conclusions and perspective}

We have demonstrated a derivational approach to program analysis that
yields novel abstract interpretations of languages with higher-order
functions, control, state, and garbage collection.
These abstract interpreters are obtained by a straightforward pointer
refinement and structural abstraction that bounds the address space,
making the abstract semantics safe and computable.  
The technique allows concrete implementation technology, such as
garbage collection, to be imported straightforwardly into that of
static analysis, bearing immediate benefits.
More generally, an abstract machine based approach to analysis shifts
the focus of engineering efforts from the design of complex analytic
models such as involved constraint languages back to the design of
programming languages and machines, from which analysis can
be \emph{derived}.
Finally, our approach uniformly scales up to richer language features
such as laziness, stack-inspection, exceptions, and object-orientation.
We speculate that store-allocating bindings and continuations is
sufficient for a straightforward abstraction of most existing
machines.

Looking forward, a semantics-based approach opens new possibilities
for design.  Context-sensitive analysis can have daunting
complexity~\cite{dvanhorn:VanHorn-Mairson:ICFP08}, which we have made
efforts to tame~\cite{dvanhorn:Might2010Resolving}, but modular
program analysis is crucial to overcome the significant cost of
precise abstract interpretation.  Modularity can be achieved without
needing to design clever approximations, but rather by
designing \emph{modular semantics} from which modular analyses follow
systematically~\cite{DBLP:journals/corr/abs-1103-1362}. 
Likewise, \emph{push-down analyses} offer infinite state-space
abstractions with perfect call-return matching while retaining
decidability.  Our approach expresses this form of abstraction
naturally: the store remains bounded, but continuations stay on the
stack.

\section{Acknowledgments}

We thank Matthias Felleisen, Jan Midtgaard, Sam Tobin-Hochstadt, and
Mitchell Wand for discussions, and the anonymous reviewers of
\emph{ICFP'10} for their close reading and helpful critiques; their
comments have improved this paper.

\bibliographystyle{abbrv} 


\begin{thebibliography}{10}

\bibitem{dvanhorn:ayers-phd93}
A.~E. Ayers.
\newblock {\em Abstract analysis and optimization of Scheme}.
\newblock PhD thesis, Massachusetts Institute of Technology, 1993.

\bibitem{dvanhorn:Biernacka2007Concrete}
M.~Biernacka and O.~Danvy.
\newblock A concrete framework for environment machines.
\newblock {\em ACM Trans. Comput. Logic}, 9(1):1--30, 2007.

\bibitem{dvanhorn:Cousot98-5}
P.~Cousot.
\newblock The calculational design of a generic abstract interpreter.
\newblock In M.~Broy and R.~Steinbr{\"{u}}ggen, editors, {\em Calculational
  System Design}. NATO ASI Series F. IOS Press, Amsterdam, 1999.

\bibitem{dvanhorn:Cousot:1977:AI}
P.~Cousot and R.~Cousot.
\newblock Abstract interpretation: a unified lattice model for static analysis
  of programs by construction or approximation of fixpoints.
\newblock In {\em POPL '77: Proceedings of the 4th ACM SIGACT-SIGPLAN Symposium
  on Principles of Programming Languages}, pages 238--252. ACM, 1977.

\bibitem{dvanhorn:Danvy:DSc}
O.~Danvy.
\newblock {\em An Analytical Approach to Program as Data Objects}.
\newblock {DSc} thesis, Department of Computer Science, Aarhus University, Oct.
  2006.

\bibitem{dvanhorn:Danvy-Nielsen:RS-04-26}
O.~Danvy and L.~R. Nielsen.
\newblock Refocusing in reduction semantics.
\newblock Research Report BRICS RS-04-26, Department of Computer Science,
  Aarhus University, Nov. 2004.

\bibitem{dvanhorn:Felleisen2009Semantics}
M.~Felleisen, R.~B. Findler, and M.~Flatt.
\newblock {\em Semantics Engineering with {PLT} Redex}.
\newblock MIT Press, Aug. 2009.

\bibitem{dvanhorn:Flanagan1993Essence}
C.~Flanagan, A.~Sabry, B.~F. Duba, and M.~Felleisen.
\newblock The essence of compiling with continuations.
\newblock In {\em PLDI '93: Proceedings of the ACM SIGPLAN 1993 Conference on
  Programming Language Design and Implementation}, pages 237--247. ACM, June
  1993.

\bibitem{dvanhorn:Jones:1981:LambdaFlow}
N.~D. Jones.
\newblock Flow analysis of lambda expressions (preliminary version).
\newblock In {\em Proceedings of the 8th Colloquium on Automata, Languages and
  Programming}, pages 114--128. Springer-Verlag, 1981.

\bibitem{dvanhorn:Jones1982Flexible}
N.~D. Jones and S.~S. Muchnick.
\newblock A flexible approach to interprocedural data flow analysis and
  programs with recursive data structures.
\newblock In {\em POPL '82: Proceedings of the 9th ACM SIGPLAN-SIGACT Symposium
  on Principles of Programming Languages}, POPL '82, pages 66--74. ACM, 1982.

\bibitem{dvanhorn:landin-64}
P.~J. Landin.
\newblock The mechanical evaluation of expressions.
\newblock {\em The Computer Journal}, 6(4):308--320, 1964.

\bibitem{dvanhorn:Meunier2006Modular}
P.~Meunier, R.~B. Findler, and M.~Felleisen.
\newblock Modular set-based analysis from contracts.
\newblock In {\em POPL '06: Conference record of the 33rd ACM SIGPLAN-SIGACT
  Symposium on Principles of Programming Languages}, pages 218--231. ACM, Jan.
  2006.

\bibitem{dvanhorn:Midtgaard2010Controlflow}
J.~Midtgaard.
\newblock Control-flow analysis of functional programs.
\newblock {\em {ACM} Computing Surveys}, 2010.

\bibitem{dvanhorn:midtgaard-jensen-sas-08}
J.~Midtgaard and T.~Jensen.
\newblock A calculational approach to {Control-Flow} analysis by abstract
  interpretation.
\newblock In M.~Alpuente and G.~Vidal, editors, {\em SAS}, volume 5079 of {\em
  LNCS}, pages 347--362. Springer, 2008.

\bibitem{dvanhorn:Midtgaard2009Controlflow}
J.~Midtgaard and T.~P. Jensen.
\newblock Control-flow analysis of function calls and returns by abstract
  interpretation.
\newblock In {\em ICFP '09: Proceedings of the 14th ACM SIGPLAN International
  Conference on Functional Programming}, pages 287--298. ACM, 2009.

\bibitem{dvanhorn:Might:2006:GammaCFA}
M.~Might and O.~Shivers.
\newblock Improving flow analyses via \({\Gamma}\){CFA}: Abstract garbage
  collection and counting.
\newblock In {\em Proceedings of the 11th ACM International Conference on
  Functional Programming (ICFP 2006)}, pages 13--25, Sept. 2006.

\bibitem{dvanhorn:Might2010Resolving}
M.~Might, Y.~Smaragdakis, and D.~Van{ }Horn.
\newblock Resolving and exploiting the {\(k\)}-{CFA} paradox: illuminating
  functional vs. object-oriented program analysis.
\newblock In {\em PLDI '10: Proceedings of the 2010 ACM SIGPLAN Conference on
  Programming Language Design and Implementation}, pages 305--315. ACM, 2010.

\bibitem{dvanhorn:Neilson:1999}
F.~Nielson, H.~R. Nielson, and C.~Hankin.
\newblock {\em Principles of Program Analysis}.
\newblock Springer-Verlag, 1999.

\bibitem{dvanhorn:Plotkin1975Callbyname}
G.~Plotkin.
\newblock Call-by-name, call-by-value and the \(\lambda\)-calculus.
\newblock {\em Theoretical Computer Science}, 1(2):125--159, Dec. 1975.

\bibitem{dvanhorn:reynolds-acm72}
J.~C. Reynolds.
\newblock Definitional interpreters for higher-order programming languages.
\newblock In {\em ACM '72: Proceedings of the ACM Annual Conference}, pages
  717--740. ACM, 1972.

\bibitem{dvanhorn:Shivers:1991:CFA}
O.~Shivers.
\newblock {\em Control-flow analysis of higher-order languages}.
\newblock PhD thesis, Carnegie Mellon University, 1991.

\bibitem{dvanhorn:Sperber2010Revised}
M.~Sperber, R.~K. Dybvig, M.~Flatt, A.~van Straaten, R.~Findler, and
  J.~Matthews.
\newblock {\em Revised [6] Report on the Algorithmic Language Scheme}.
\newblock Cambridge University Press, 2010.

\bibitem{DBLP:journals/corr/abs-1103-1362}
S.~Tobin-Hochstadt and D.~V. Horn.
\newblock Modular analysis via specifications as values.
\newblock {\em CoRR}, abs/1103.1362, 2011.

\bibitem{dvanhorn:VanHorn-Mairson:ICFP08}
D.~Van{ }Horn and H.~G. Mairson.
\newblock Deciding \(k\){CFA} is complete for {EXPTIME}.
\newblock In {\em ICFP '08: Proceeding of the 13th ACM SIGPLAN International
  Conference on Functional Programming}, pages 275--282. ACM, 2008.

\bibitem{dvanhorn:wright-jagannathan-toplas98}
A.~K. Wright and S.~Jagannathan.
\newblock Polymorphic splitting: an effective polyvariant flow analysis.
\newblock {\em ACM Trans. Program. Lang. Syst.}, 20(1):166--207, 1998.

\end{thebibliography}

\end{document}